\def\scpmsat{{\sc{Planar Monotone $(2,\,3)$-Sat-3}}}
\def\bpmsat{{\sc{Restricted Planar Monotone $(2,\,3)$-Sat-4}}}
\def\restrsat{{\sc{Restricted Planar Monotone $(2,\,3)$-Sat}}}
\def\plan3satE5{{\sc{Planar Monotone 3-Sat-E5}}}
\def\plantc3sat4{{\sc{4-Bounded Planar 3-Connected 3-Sat}}} 
\def\plansat{{\sc{Planar 3-Sat}}}
\def\sat34{{\sc{3-Sat-4}}}
\def\multisat{{\sc{Sat}$^*$}}
\def\planmultisat{{\sc{Planar Monotone 3-\multisat}}}
\newtheorem{theorem}{Theorem}
\newtheorem{corollary}{Corollary}
\newtheorem{definition}{Definition}
\begin{document}
	\title{On planar variants of the monotone satisfiability problem with bounded variable appearances}
	\author{Andreas Darmann, Janosch Döcker, Britta Dorn}
	\date{\today}
	\maketitle

\begin{abstract}
We show $\mathcal{NP}$-completeness for several planar variants of the monotone satisfiability problem with bounded variable appearances. With one exception the presented variants have an associated bipartite graph where the vertex degree is bounded by at most four. Hence, a planar and orthogonal drawing for these graphs can be computed efficiently, which may turn out to be useful in reductions using these variants as a starting point for proving some decision problem to be $\mathcal{NP}$-hard. 
\end{abstract}

\section{Introduction}
The satisfiability problem for boolean formulae in conjunctive normal form (\textsc{Sat}), where each clause is monotone, i.\,e., all literals in a clause are positive or all of them are negative, is known to be $\mathcal{NP}$-complete \cite{Gold1978}, even if every clause contains exactly three \emph{distinct} literals (see the work by Li \cite{Li1997}). Further, Tovey \cite{Tovey1984} proved \sat34{}, where each clause contains exactly three distinct variables and every variable appears in at most four clauses, to be $\mathcal{NP}$-complete and showed that instances of this problem where every variable appears at most three times are always satisfiable. Darmann and Döcker~\cite{Darmann2016} show that \textsc{Monotone} \sat34, i.e., the restriction of \sat34{} to instances in which each clause is monotone remains $\mathcal{NP}$-complete. For a general information on computational complexity and, in particular, the concept of $\mathcal{NP}$-completeness we refer to Garey and Johnson \cite{Garey1979}.

Unless otherwise stated, we write clauses as subsets of a finite set $\mathcal{V}$ of variables; only in one of the problems considered (see Definition~\ref{def:duplicates}) we use multisets in order to allow for duplicates of literals within a clause. A $k$-clause contains exactly $k$ distinct variables and a clause is called monotone if either all contained literals are positive or all of them are negative, respectively. A mixed clause is a clause which is not monotone, i.e., it contains at least one positive and at least one negative literal. A clause is called positive (negative) if it contains only positive (negative) literals. We will refer to the replacement rule Gold~\cite[p.\,314f]{Gold1978} described to transform an instance $\mathcal{I}$ of the satisfiability problem for boolean formulae in conjunctive normal form into an instance $\mathcal{I}'$, such that each clause in $\mathcal{I}'$ is monotone and the two instances $\mathcal{I}$ and $\mathcal{I}'$ are equisatisfiable, as \emph{Gold's rule}. In set notation this rule looks as follows: A mixed clause $C = C^+ \cup C^-$ ($C^+$ contains the positive literals of $C$ and $C^-$ the negative literals, respectively) is replaced by the two monotone clauses $C^+ \cup \{u\}$ and $C^- \cup \{\bar{u}\}$, where $u$ is a new variable. 

In this paper, we consider planar variants of the satisfiability problem, i.\,e, instances $\mathcal{I} = (\mathcal{V}, \mathcal{C})$, where $\mathcal{V}$ is a set of variables and $\mathcal{C}$ a collection of clauses $C_j$ so that the graph~$G_{\mathcal{V},\mathcal{C}} := (V,\,E)$ with $V := \mathcal{V} \cup \mathcal{C}$ and 
\[
E := \{\{v_i,\,C_j\} : v_i \in C_j \vee \bar{v}_i \in C_j\}
\] 
is planar. Lichtenstein \cite{Lichtenstein1982} proved the planar variant of the satisfiability problem to be $\mathcal{NP}$-complete, even for the case that each clause contains at most three variables and with the additional edges~$\{\{v_i,\,v_{i+1}\} : 1 \leq i < n\} \cup \{v_n,\, v_1\}$ where $n := |V|$ -- this variant is called \plansat. The restriction of this problem -- without the additional edges -- to monotone instances remains $\mathcal{NP}$-complete, even if a rectilinear representation is given \cite{Berg2010}. Other related $\mathcal{NP}$-complete problems -- without the monotonicity requirement -- are, e.\,g., \plantc3sat4\ \cite{Kratochvil1994} and a variant of Dahlhaus et al. \cite{Dahlhaus1994} where every variable appears exactly three times (more details on these variants are given in the following section when they are used in reductions). We present hardness proofs for planar and monotone variants of the satisfiability problem with bounds on the number of variable appearances.  \newline
In what follows, by \textsc{Sat}-$s$ (respectively \textsc{Sat}-E$s$) we denote the restriction of \textsc{Sat} to instances in which  each variable appears at most $s$ times (respectively exactly $s$ times), $s \in \mathbb{N}$. 

\section{Bounded planar variants}

In this section we consider planar variants of the monotone satisfiability problem with the main focus on hardness with respect to bounds on variable appearances. 

\begin{definition}[\textsc{Planar Monotone $(2,\,3)$-Sat}]\
\label{def:PlanMon23Sat}
\begin{itemize}
\item \textbf{Input:} A set of variables $\mathcal{V} = \{v_1,\,v_2,\,\ldots,\,v_n\}$, a collection of clauses $\mathcal{C} = \{C_1,\,C_2,\,\ldots,\,C_m\}$ and a graph $G_{\mathcal{V},\mathcal{C}} := (V,\,E)$ with $V := \mathcal{V} \cup \mathcal{C}$ and $E := \{\{v_i,\,C_j\} : v_i \in C_j \vee \bar{v}_i \in C_j\}$, so that the following properties hold:
\begin{enumerate}
\item The bipartite graph $G_{\mathcal{V},\mathcal{C}}$ is planar.
\item Each clause contains two or three distinct literals, either all or none of them are negated.
\end{enumerate}
\item \textbf{Question:} Is the collection of clauses $\mathcal{C}$ satisfiable?
\end{itemize}
\end{definition}

\begin{theorem}\label{the:scpmsat}
\scpmsat\ is $\mathcal{NP}$-complete.
\end{theorem}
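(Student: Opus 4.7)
The plan is to establish $\mathcal{NP}$-hardness via a polynomial-time reduction from \textsc{Monotone Planar 3-Sat}, shown to be $\mathcal{NP}$-complete by Berg et al.~\cite{Berg2010}; membership in $\mathcal{NP}$ is immediate. Let $(\mathcal{V},\mathcal{C})$ be an input instance given together with a planar embedding of its incidence graph. For every variable $v$ that appears in $k \geq 4$ clauses, the plan is to install a cycle gadget at $v$'s former position, while variables occurring in at most three clauses remain untouched.

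The gadget I have in mind arises by applying Gold's rule to a cyclic equivalence chain. I introduce $k$ fresh copies $v_1,\ldots,v_k$ of $v$ (one per occurrence, each inheriting the polarity of its original appearance) together with $k$ fresh helper variables $u_1,\ldots,u_k$, and then add, for every $i \in \{1,\ldots,k\}$ with indices taken modulo $k$, the two monotone 2-clauses $\{v_i,u_i\}$ and $\{\bar v_{i+1},\bar u_i\}$. Each such pair encodes the implication $v_{i+1} \Rightarrow v_i$, so the closed cycle of implications forces $v_1 = \cdots = v_k$; conversely, assigning each $v_i$ the truth value of $v$ and each $u_i$ its negation satisfies every gadget clause, so the transformation preserves satisfiability. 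By construction every clause is monotone and has size 2 or 3, each copy $v_i$ appears exactly three times (once in its original clause, once in $\{v_i,u_i\}$, and once in $\{\bar v_i,\bar u_{i-1}\}$), and each $u_i$ appears exactly twice, so the required variable-appearance bound is respected.

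The step I expect to require the most care is verifying that planarity survives the local rewiring at each high-degree variable. My plan is to exploit the given embedding: enumerate the $k$ edges incident to $v$ in their cyclic order, draw the $2k$ new variable nodes $v_1,u_1,v_2,u_2,\ldots,v_k,u_k$ alternately along a small circle centred at $v$'s former position, and place the clause node of each gadget clause on the arc between the two adjacent circle variables that it joins. The single remaining edge leaving each $v_i$ is then routed outward to the original clause in the slot dictated by the inherited cyclic order. Since this is a local vertex-splitting operation that replaces a single planar star at $v$ by a planar sub-drawing whose outward edges appear in the same cyclic order, planarity of the global incidence graph is preserved, and the reduction is complete.
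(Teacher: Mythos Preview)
Your argument is correct, but it takes a different route from the paper's. The paper starts from the Dahlhaus et al.\ variant of \plansat{} in which each clause already has two or three literals and each variable already appears in exactly three clauses; all that remains is to eliminate the mixed clauses, and a direct application of Gold's rule does this while visibly preserving both planarity and the occurrence bound. You instead start from Berg--Khosravi's \textsc{Planar Monotone 3-Sat}, which is already monotone but has no appearance bound, and then bring the degrees down with a cycle-of-copies gadget. Both are standard moves; the paper's proof is a one-liner because its source problem already has the harder of the two constraints (the degree bound), whereas your approach trades a better-known source problem for the extra work of designing and embedding the splitting gadget. One small point worth making explicit in your write-up: the target problem forbids $1$-clauses, so you should note that the Berg--Khosravi instances can be taken with clauses of size two or three (their reduction indeed produces only such clauses).
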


\begin{proof}
The problem is in $\mathcal{NP}$, since it is a special case of the satisfiability problem for boolean formulas in conjunctive normal form. We show that \scpmsat\ is $\mathcal{NP}$-hard by a reduction from a restricted version of Lichtenstein's \plansat\ \cite{Lichtenstein1982}. Dahlhaus et al. \cite{Dahlhaus1994} have shown that the latter problem remains $\mathcal{NP}$-hard if each clause contains two or three literals, each variable appears in exactly three clauses, with one of its literals appearing in two clauses and the other literal in one clause. It is easy to see that applying Gold's rule to each mixed clause of such an instance preserves planarity and respects the bounds on variable appearances. Therefore, \scpmsat\ is $\mathcal{NP}$-hard. 
\end{proof}

Note that for any instance $\mathcal{I}=(\mathcal{V},\mathcal{C})$ of \textsc{Planar Monotone $(2,\,3)$-Sat-$3$} and any variable $x \in \mathcal{V}$ which appears at most twice, we can construct an equisatisfiable instance $\mathcal{I}'=(\mathcal{V}',\mathcal{C}')$ of \textsc{Planar Monotone $(2,\,3)$-Sat-$3$} in which the number of appearances of $x$ is increased by exactly one by adding to $\mathcal{C}$ the clauses $\{x,u,v\}$, $\{u,v\}$, $\{\bar{u},\bar{v}\}$, where $u,v$ are newly introduced variables. As a consequence, we get the following corollary.

\begin{corollary}\label{cor:plan23SatE3}
 \textsc{Planar Monotone $(2,\,3)$-Sat-E$3$} is $\mathcal{NP}$-complete.
\end{corollary}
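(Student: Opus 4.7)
The plan is to reduce from \scpmsat, whose $\mathcal{NP}$-hardness is supplied by Theorem~\ref{the:scpmsat}, using the gadget described in the paragraph immediately preceding the corollary. That gadget adds the three monotone clauses $\{x,u,v\}$, $\{u,v\}$, $\{\bar u,\bar v\}$ (with $u,v$ fresh), bumping the appearance count of $x$ by exactly one while preserving satisfiability, planarity, and monotonicity. If one shows this carefully, iterating the construction to bring every variable up to exactly three appearances completes the reduction.

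First I would verify correctness of the gadget once, in detail. The clauses $\{u,v\}$ and $\{\bar u,\bar v\}$ jointly force exactly one of $u,v$ to be true in any satisfying assignment, so the clause $\{x,u,v\}$ is satisfied automatically regardless of the value assigned to $x$. Hence the augmented formula is equisatisfiable with the original, every added clause is monotone and contains two or three distinct literals, the auxiliary variables $u$ and $v$ each appear exactly three times, and $x$ picks up exactly one further (positive) appearance.

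Second I would argue that the gadget preserves planarity. The gadget itself is a small planar bipartite graph (a subgraph of $K_{3,3}$ missing two edges) in which $x$ sits on the outer boundary, so it can be drawn entirely inside any face of the current embedding of $G_{\mathcal{V},\mathcal{C}}$ that has $x$ on its boundary; such a face exists because $x$ is a vertex. After the insertion, $x$ remains on the boundary of at least one face of the new embedding adjacent to the inserted region, so the same argument applies if a second gadget must subsequently be attached to $x$.

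Third, I would assemble the reduction: starting from a \scpmsat-instance $\mathcal{I}$, I go through the variables one by one and attach the gadget $3 - k$ times to each variable currently appearing $k \in \{1,2\}$ times, using pairwise disjoint fresh auxiliary variables each time. The output is an equisatisfiable instance of \textsc{Planar Monotone $(2,\,3)$-Sat-E$3$}, and membership in $\mathcal{NP}$ is inherited from \textsc{Sat}. No step is a genuine obstacle; the only thing requiring mild attention is tracking the planar embedding across the successive local insertions, and since each insertion is confined to a single face adjacent to the affected variable and introduces only constantly many new vertices, the embedding is easily maintained throughout.
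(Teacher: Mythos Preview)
Your proposal is correct and follows exactly the approach the paper uses: the paragraph preceding the corollary is the paper's entire argument, and it introduces the same gadget $\{x,u,v\},\{u,v\},\{\bar u,\bar v\}$ to increase the appearance count of $x$ by one. You supply more detail than the paper (equisatisfiability, planarity of the attachment, appearance counts for $u$ and $v$), all of it sound.
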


\begin{definition}[\restrsat]\
\label{def:restrSAT}
\begin{itemize}
\item \textbf{Input:} A set of variables $\mathcal{V} = \{v_1,\,v_2,\,\ldots,\,v_n\}$, a collection of clauses $\mathcal{C} = \{C_1,\,C_2,\,\ldots,\,C_m\}$ and a graph $G_{\mathcal{V},\mathcal{C}} := (V,\,E)$ with $V := \mathcal{V} \cup \mathcal{C}$ and $E := \{\{v_i,\,C_j\} : v_i \in C_j \vee \bar{v}_i \in C_j\}$, so that the following properties hold:
\begin{enumerate}
\item The bipartite graph $G_{\mathcal{V},\mathcal{C}}$ is planar.
\item Each clause contains two or three distinct literals, either all or none of them are negated. Every 3-clause contains only positive literals.
\item  Each variable appears negated exactly once.
\end{enumerate}
\item \textbf{Question:} Is the collection of clauses $\mathcal{C}$ satisfiable?
\end{itemize}
\end{definition}

\begin{theorem}\label{the:bpmsat}
\bpmsat\ is $\mathcal{NP}$-complete, even if every variable appears at least three times.
\end{theorem}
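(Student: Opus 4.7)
The reduction is from \textsc{Planar Monotone $(2,\,3)$-Sat-E$3$} (Corollary~\ref{cor:plan23SatE3}). Inspecting the proofs of Theorem~\ref{the:scpmsat} and Corollary~\ref{cor:plan23SatE3}, which compose Gold's rule with the doubling gadget $\{x,u,v\},\{u,v\},\{\bar u,\bar v\}$, every variable $x$ of the resulting instance has $(p_x,n_x)\in\{(1,2),(2,1)\}$, where $p_x$ and $n_x$ denote its numbers of positive and negative occurrences. In particular, every variable has both signs present.

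Given such an instance $I=(\mathcal{V},\mathcal{C})$, I construct $I'=(\mathcal{V}',\mathcal{C}')$ as follows. For each $x\in\mathcal V$ introduce two fresh variables $x^+$ and $x^-$. Each positive occurrence of $x$ in a clause is replaced by a positive occurrence of $x^+$, and each negative occurrence $\bar x$ is replaced by a \emph{positive} occurrence of $x^-$. In addition, add the two new monotone $2$-clauses $\{x^+,\,x^-\}$ and $\{\overline{x^+},\,\overline{x^-}\}$. Jointly these enforce $x^+\oplus x^-=1$, so $x^-\equiv\bar x$, and the substitution preserves the original semantics.

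The required properties of \bpmsat{} follow directly. Every original $3$-clause is monotone and is rewritten so that each of its literals becomes a positive literal of a new variable (since $\bar y$ becomes $y^-$), so every $3$-clause in $I'$ is positive. Each new variable $x^+$ (resp.\ $x^-$) is negated only in $\{\overline{x^+},\,\overline{x^-}\}$, hence exactly once. Variable $x^+$ occurs $p_x+2\in\{3,4\}$ times and $x^-$ occurs $n_x+2\in\{3,4\}$ times, matching the bounds (and the strengthened "at least three" condition). Equisatisfiability is immediate: extend a satisfying assignment of $I$ by $x^+:=x$, $x^-:=\bar x$, and conversely take $x:=x^+$ from a model of $I'$.

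\textbf{Main obstacle.} The only delicate aspect is planarity. When a variable vertex $x$ is locally split into $x^+$ and $x^-$ with the two new clause vertices attached to both of them, the cyclic order of the three edges incident to $x$ in a planar embedding of $G_{\mathcal{V},\mathcal{C}}$ must admit a contiguous bipartition separating the positive from the negative edges. Since exactly two edges share one sign and one edge carries the other, among the three possible cuts, the one isolating the lone minority edge always realizes such a separation; the two new clause vertices can then be drawn inside the small face created between $x^+$ and $x^-$. Hence $G_{\mathcal{V}',\mathcal{C}'}$ remains planar and the reduction is complete.
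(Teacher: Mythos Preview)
Your argument is correct, and it is genuinely different from the paper's proof. The paper reduces directly from the (non-monotone) Dahlhaus et~al.\ variant of \plansat{}: it fixes an orthogonal drawing and then replaces each variable vertex by a ring gadget introducing two fresh variables $x_i,a_i$ per \emph{occurrence} of $x$, with consistency enforced by a cycle of monotone $2$-clauses. This simultaneously ``positivises'' all original clauses and handles planarity via the explicit drawing. Your route instead leverages Corollary~\ref{cor:plan23SatE3}, so monotonicity is already in hand; you then only need the much lighter per-variable split $x\mapsto(x^+,x^-)$ with the two linking $2$-clauses. The occurrence profile $(p_x,n_x)\in\{(1,2),(2,1)\}$ that you extract from the proofs of Theorem~\ref{the:scpmsat} and Corollary~\ref{cor:plan23SatE3} is indeed correct (Dahlhaus variables keep their $2{+}1$ profile under Gold's rule, Gold's auxiliary variables are $(1,1)$ and become $(2,1)$ after the padding, and the padding variables $u,v$ are $(2,1)$), and your planarity argument is sound because a degree-$3$ vertex always admits a contiguous $2{+}1$ split in any rotation system. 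What you gain is a shorter, picture-free reduction with only two new variables and two new clauses per original variable; what the paper's construction buys is independence from Corollary~\ref{cor:plan23SatE3} and an explicit embedded gadget that is reused (in its degree-$4$ form) later in Theorem~\ref{the:planmultisatE5}.
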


\begin{proof}
As in Theorem \ref{the:scpmsat} the problem is in $\mathcal{NP}$ and we use the variant of Dahlhaus et al. \cite{Dahlhaus1994} for the  reduction: each clause contains two or three literals, each variable appears in exactly three clauses, with one of its literals appearing in two clauses and the other literal in one clause. We can compute a planar and orthogonal drawing of the graph on a grid of size $n \times n$ in linear time using the algorithm of Biedl and Kant~\cite{Biedl1998}. On this drawing we apply the local replacements shown in Figure \ref{fig:orientations}, 
\begin{figure}
\centering
\includegraphics[width=\textwidth]{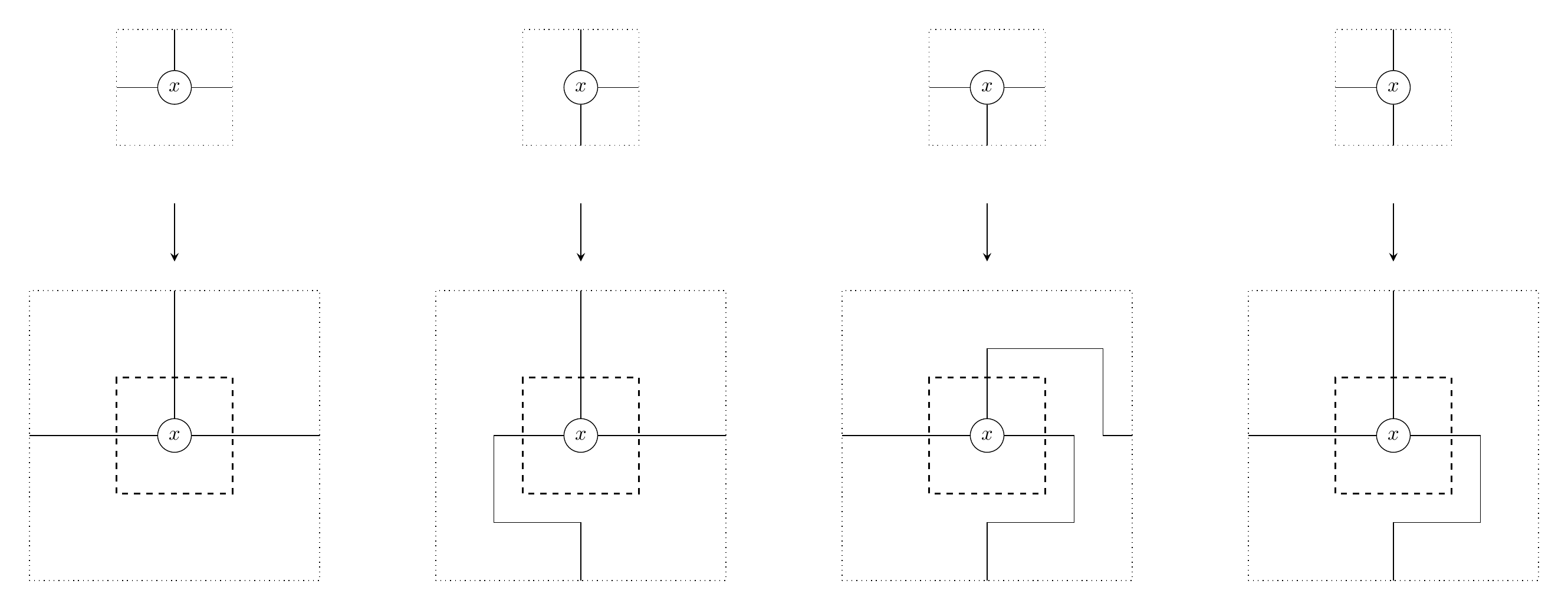}
\caption{Local replacement of a variable vertex $x$. After application of this rule the drawing of the vertex looks locally as shown in the dashed square.}
\label{fig:orientations}
\end{figure} 
so that for each variable the outgoing edges of the corresponding vertex are locally drawn identically (see the dashed squares in Figure \ref{fig:orientations}). 
\begin{figure}
\centering
\includegraphics[width=\textwidth]{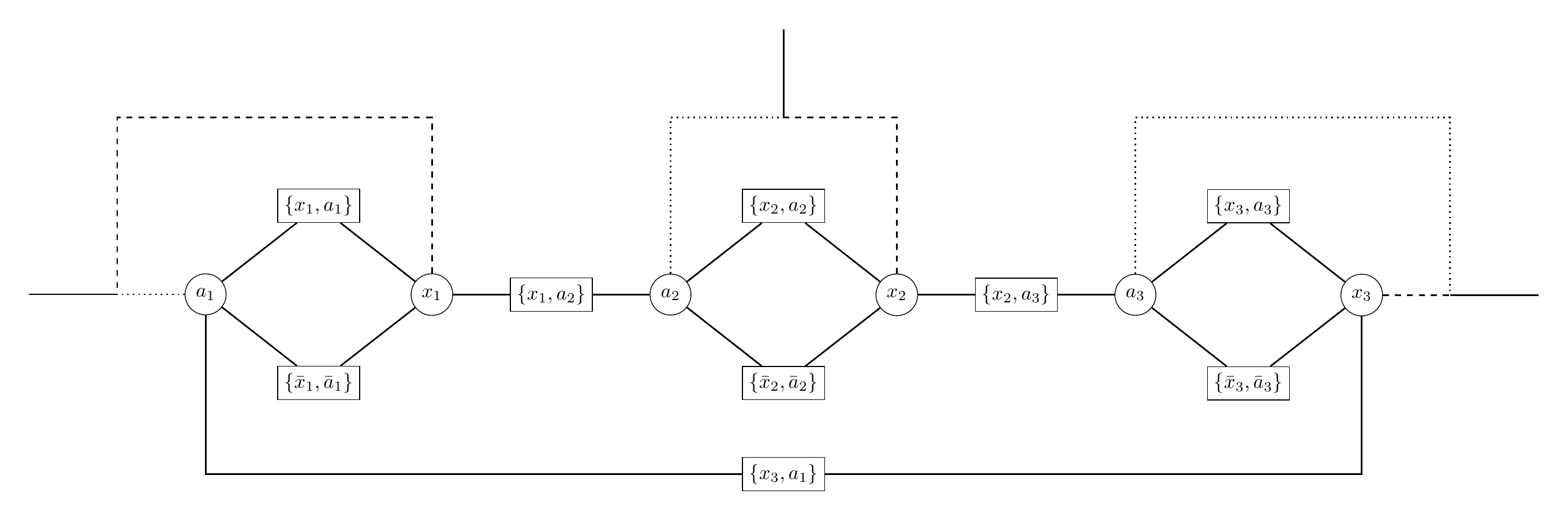}
\caption{Creating an equisatisfiable planar monotone instance with bounded variable appearances. The dotted line is used if the corresponding appearance of the variable is negated; otherwise the dashed line is used.}
\label{fig:gadget}
\end{figure} 
For every variable we replace the dashed square with the construction shown in Figure \ref{fig:gadget} (this gadget is inspired by the one given by Dahlhaus et al. \cite[p.\,18]{Dahlhaus1994} and the one by de Berg and Khosravi \cite[p.\,6]{Berg2010}). For every appearance of a variable $x$ we create two new variables $x_i$ and $a_i$ and replace $x$ with $x_i$. In order to satisfy the two clauses between $a_i$ and $x_i$ in the gadget (see Figure \ref{fig:gadget}) we have to assign opposite truth values to $a_i$ and $x_i$. The ring structure forces us to assign the same truth value to all $x_i$, and consequently the opposite truth value to all $a_i$. If $x_i$ appears non-negated in the corresponding clause $C_{x_i}$ of the original instance, we just use the outgoing dashed line of $x_i$. Otherwise, we replace $\bar{x}_i$ with $a_i$ in $C_{x_i}$, and replace the edge $\{x_i,\, C_{x_i}\}$ with $\{a_i,\, C_{x_i}\}$. The graph remains planar since we only need to reroute the dashed line to $a_i$ by using the corresponding dotted line instead. Now, it is not hard to see that the resulting instance is equisatisfiable. Every clause contains two or three distinct literals since we introduced only monotone 2-clauses. All clauses of the original instance are replaced with positive clauses containing the same number of literals. Hence, every clause is monotone and all 3-clauses are positive. We replace every variable with the construction described above, so every variable appears either three or four times; either $a_i$ appears three times and $x_i$ four times or the other way round. Moreover, recalling that only the clauses introduced in the gadget contain negative literals in the final instance, we can conclude that every variable appears exactly once negated.  
\end{proof}

Now, for any instance $\mathcal{I}=(\mathcal{V},\mathcal{C})$ of \bpmsat\   we can construct an equisatisfiable instance  $\mathcal{I}'=(\mathcal{V}',\mathcal{C}')$ of \restrsat-\textsc{E$4$} as follows:
for any variable $x \in \mathcal{V}$ which appears three times add to $\mathcal{C}$ the clauses $\{x,a,b\}$, $\{a,c,d\}$,
$\{b,c,d\}$, $\{a,b\}$, $\{\bar{a},\bar{b}\}$, $\{c,d\}$, $\{\bar{c},\bar{d}\}$, where $a, b, c, d$ are newly introduced variables. As a consequence, from Theorem~\ref{the:bpmsat} we get the following result.

\begin{corollary}
 \restrsat-\textsc{E$4$} is $\mathcal{NP}$-complete.
\end{corollary}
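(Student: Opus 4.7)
The plan is a reduction from \bpmsat, which by Theorem~\ref{the:bpmsat} is $\mathcal{NP}$-complete even when every variable appears at least three times; inspecting the proof of that theorem, every variable in fact appears either three or four times. For each variable $x$ that appears exactly three times I would attach the seven-clause gadget described in the paragraph above, using fresh variables $a,b,c,d$; the only edge linking the gadget to the rest of the formula is the one between $x$ and $\{x,a,b\}$. Membership in $\mathcal{NP}$ is clear, so the work is to verify that this reduction is valid.

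I would first establish equisatisfiability. The clauses $\{a,b\}$ and $\{\bar{a},\bar{b}\}$ jointly force exactly one of $a,b$ to be true, and analogously $\{c,d\}$ together with $\{\bar{c},\bar{d}\}$ forces exactly one of $c,d$ to be true. Under any such local choice the remaining gadget clauses $\{x,a,b\}$, $\{a,c,d\}$, $\{b,c,d\}$ are automatically satisfied, so the gadget places no constraint on $x$; hence a satisfying assignment of the original instance extends to one of the new instance, and restriction gives the converse. Next I would verify the structural invariants of \restrsat-\textsc{E4}: the added clause $\{x,a,b\}$ raises $x$'s appearance count from three to four, each of $a,b,c,d$ occurs in exactly four gadget clauses, and each of these four is negated only in $\{\bar{a},\bar{b}\}$ or $\{\bar{c},\bar{d}\}$, i.e.\ exactly once. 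The unique negation of $x$ inherited from \bpmsat\ is untouched; all seven new clauses are monotone; and all three new 3-clauses are positive, so every 3-clause of the output is positive.

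The remaining, and main, obstacle is planarity. The gadget subgraph of $G_{\mathcal{V}',\mathcal{C}'}$ is itself planar: contracting the pendant edge from $x$ to $\{x,a,b\}$ yields three parallel $a$-$b$ connections (through $\{x,a,b\}$, $\{a,b\}$, $\{\bar{a},\bar{b}\}$) and two parallel $c$-$d$ connections (through $\{c,d\}$, $\{\bar{c},\bar{d}\}$), bridged by the two 3-clauses $\{a,c,d\}$ and $\{b,c,d\}$, a configuration that is straightforwardly drawable without crossings. Since $x$ has degree one within the gadget, the entire gadget can be embedded inside any face of the existing planar embedding incident to $x$, so planarity of the overall bipartite graph is preserved, completing the reduction.
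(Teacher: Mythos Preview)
Your proposal is correct and follows exactly the paper's own approach: the same reduction from \bpmsat\ using the identical seven-clause gadget $\{x,a,b\}$, $\{a,c,d\}$, $\{b,c,d\}$, $\{a,b\}$, $\{\bar{a},\bar{b}\}$, $\{c,d\}$, $\{\bar{c},\bar{d}\}$ for each variable appearing three times. You have supplied the verification of equisatisfiability, the structural invariants, and planarity that the paper leaves to the reader.
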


\bigskip

An interesting question is whether or not we can replace a monotone 2-clause in an instance of \textsc{Planar Monotone $(2,\,3)$-Sat} with monotone 3-clauses such that the result is an equisatisfiable instance of this problem, i.\,e., the corresponding graph remains planar. Unfortunately, the replacement rules presented in previous work \cite{Darmann2016} do not preserve planarity. In Figure \ref{fig:bipartite} this is shown for the rules $\mathcal{R}_i$ in Theorem 1 of this article: Rule $\mathcal{R}_1$ replaces a clause $\{x,\,y\}$ with  clauses
\[
\{x,\,y,\,u\},\,\{x,\,y,\,v\},\,\{x,\,y,\,w\},\,\{\bar{u},\,\bar{v},\,\bar{w}\},
\]
where $u,\,v,\,w$ are new variables; a clause $\{\bar{x},\,\bar{y}\}$ is handled analogously (this rule is due to Li \cite[p.\,295]{Li1997}). Since the other rules use $\mathcal{R}_1$ as a ``subroutine''(in the case of $\mathcal{R}_3$ indirectly by building on $\mathcal{R}_2$), none of the rules preserves planarity. An application of the replacement rule presented in the proof of Theorem 2 in the article \cite[p.\,4]{Darmann2016} mentioned above results in a graph with $K_{3,3}$ as a minor as well, since the clauses 3, 4, 5 and 6 together with the variables contained in these clauses induce a subgraph isomorphic to the graph shown on the left in Figure \ref{fig:bipartite}.

\begin{figure}
\centering
\includegraphics[width=\textwidth]{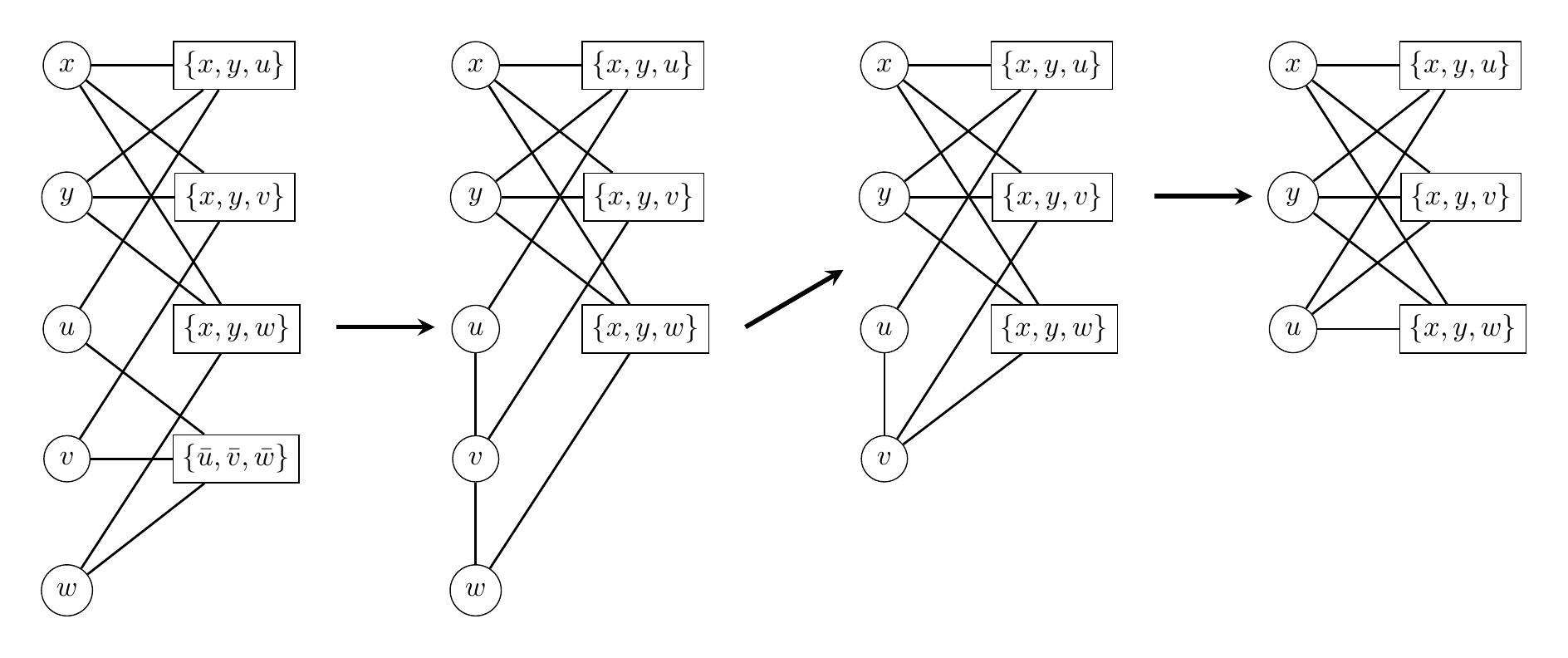}
\caption{The replacement rules $\mathcal{R}_i$ do not preserve planarity, since already the bipartite graph associated with $\mathcal{R}_1(\{x,y\})$ has $K_{3,3}$ as a minor.}
\label{fig:bipartite}
\end{figure} 

Of course, we can achieve the goal described above by allowing duplicates of a variable in a clause; the clauses are now multisets for this reason. We denote this variation of \textsc{Sat} in which clauses are multisets of variables by \multisat (again, \multisat-E$s$ indicates that each variable appears exactly $s$ times). With this relaxation we show that the problem is $\mathcal{NP}$-complete already for the case that each variable appears exactly $4$ times. 
Formally, we define the following problem.

\begin{definition}[\planmultisat]\
\label{def:duplicates}
\begin{itemize}
\item \textbf{Input:} A set of variables $\mathcal{V} = \{v_1,\,v_2,\,\ldots,\,v_n\}$, a collection of clauses $\mathcal{C} = \{C_1,\,C_2,\,\ldots,\,C_m\}$ and a graph $G_{\mathcal{V},\mathcal{C}} := (V,\,E)$ with $V := \mathcal{V} \cup \mathcal{C}$ and $E := \{\{v_i,\,C_j\} : v_i \in C_j \vee \bar{v}_i \in C_j\}$, so that the following properties hold:
\begin{enumerate}
\item The bipartite graph $G_{\mathcal{V},\mathcal{C}}$ is planar.
\item Each clause contains exactly three literals, either all or none of them are negated (duplicates are allowed).
\end{enumerate}
\item \textbf{Question:} Is the collection of clauses $\mathcal{C}$ satisfiable?
\end{itemize}
\end{definition}

\begin{theorem}\label{the:planmultisatE4} \planmultisat-\textsc{E}$4$ is $\mathcal{NP}$-complete. \end{theorem}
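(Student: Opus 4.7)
The plan is to reduce from \textsc{Planar Monotone $(2,3)$-Sat-E$3$}, whose $\mathcal{NP}$-hardness is given by Corollary~\ref{cor:plan23SatE3}; membership of \planmultisat-\textsc{E}$4$ in $\mathcal{NP}$ is immediate. Starting from a planar embedding of $G_{\mathcal{V},\mathcal{C}}$ in which every clause is a monotone $2$- or $3$-clause and every variable appears exactly three times, the goal is to turn every clause into a monotone $3$-clause (with duplicates allowed) and to raise the appearance count of every variable to exactly four. I would achieve this by two strictly local transformations, each preserving planarity.

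First, for each positive $2$-clause $\{a,b\}$ (the negative case being symmetric) I introduce a fresh variable $d$ and replace the clause by the two monotone $3$-clauses $\{a,b,d\}$ and $\{\bar d,\bar d,\bar d\}$; the second clause fixes $d$ to false, so the first is logically equivalent to $a\vee b$, while $d$ appears $1+3=4$ times and the counts of $a$ and $b$ stay put. Second, for each original variable $v\in\mathcal{V}$ (still appearing three times after the first transformation) I introduce two fresh variables $e_v,f_v$ and three new monotone $3$-clauses $\{v,e_v,f_v\}$, $\{e_v,e_v,e_v\}$, $\{f_v,f_v,f_v\}$; the last two force $e_v=f_v=\top$, making $\{v,e_v,f_v\}$ trivially true and imposing no constraint on $v$, yet $v$ gains exactly one appearance and both $e_v$ and $f_v$ appear $1+3=4$ times. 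Each gadget is a small tree attached to the existing graph via a single interface vertex (the former clause node $\{a,b\}$ in the first case, the vertex $v$ in the second) and can be embedded inside any face incident to that vertex, so planarity is maintained throughout.

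Equisatisfiability is straightforward in both directions: a satisfying assignment of the original instance extends by setting each $d$ to false and each $e_v,f_v$ to true, and conversely any satisfying assignment of the constructed instance is forced by the triple-literal clauses to take exactly these gadget values, thereby reproducing $a\vee b$ on each original $2$-clause and leaving the original $3$-clauses untouched. The construction is polynomial, planar, monotone, uses only $3$-clauses (with duplicates permitted), and each original, padding, and bumping variable ends up with exactly four appearances. The most delicate point I expect to verify is this appearance-count bookkeeping; it is what dictates the use of the triple-literal forcing clauses $\{\bar d,\bar d,\bar d\}$ and $\{e_v,e_v,e_v\}$, each of which consumes exactly three of its fresh variable's four appearances so that the single interface occurrence in $\{a,b,d\}$ or $\{v,e_v,f_v\}$ brings the total to precisely four.
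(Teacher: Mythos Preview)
Your proof is correct and follows essentially the same approach as the paper: reduce from \textsc{Planar Monotone $(2,3)$-Sat-E$3$}, pad each $2$-clause $\{a,b\}$ to $\{a,b,d\}$ together with the forcing clause $\{\bar d,\bar d,\bar d\}$, and then bump each original variable's count from three to four by attaching a small always-satisfiable monotone gadget. The only (inessential) difference is the choice of bumping gadget: the paper uses $\{x,u,u\},\{u,u,v\},\{v,v,v\}$, whereas you use $\{v,e_v,f_v\},\{e_v,e_v,e_v\},\{f_v,f_v,f_v\}$; both introduce two fresh variables and three monotone $3$-clauses, achieve the exact count of four for every variable, and are trivially planar for the same tree-attachment reason you give.
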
 

\begin{proof}Reduction from  \textsc{Planar Monotone $(2,\,3)$-Sat-E$3$} (see Corollary~\ref{cor:plan23SatE3}). Given an instance $\mathcal{I}=(\mathcal{V},\mathcal{C})$
of \textsc{Planar Monotone $(2,\,3)$-Sat-E$3$}, construct instance $\mathcal{I}'=(\mathcal{V}',\mathcal{C}')$
of  \planmultisat-\textsc{E}$4$ as follows. Replace each $2$-clause $\{x,y\}$ of
$\mathcal{C}$ by the two $3$-clauses $\{x,y,z\}$, $\{\bar{z},\bar{z},\bar{z}\}$
($z$ is a new variable). Recall that each variable $x \in \mathcal{V}$ appears exactly three times in 
collection $\mathcal{C}$. For each variable  $x \in \mathcal{V}$
 add the clauses $\{x,u,u\}$, $\{u,u,v\}$, $\{v,v,v\}$,
where $u,v$ are new variables. \newline
Obviously, $\mathcal{I}$ is a ``yes''-instance of \textsc{Planar Monotone $(2,\,3)$-Sat-E$3$} if
and only if $\mathcal{I}'$ is a ``yes''-instance of  \planmultisat-\textsc{E}$4$.
It is also easy to see that the graph $G_{\mathcal{V}',\mathcal{C}'}$
is planar. \end{proof}

\begin{theorem}\label{the:planmultisatE5}
\planmultisat-\textsc{E}$5$ is $\mathcal{NP}$-complete, even if restricted to instances in which the graph $G_{\mathcal{V},\mathcal{C}}$ is biconnected. 
\end{theorem}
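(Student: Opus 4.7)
The plan is to prove $\mathcal{NP}$-hardness by reduction from \planmultisat-\textsc{E}$4$ (Theorem~\ref{the:planmultisatE4}); membership in $\mathcal{NP}$ is immediate since the problem is a special case of \textsc{Sat}.

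Given an instance $\mathcal{I}=(\mathcal{V},\mathcal{C})$ of the E$4$ variant together with a planar embedding of $G_{\mathcal{V},\mathcal{C}}$ (e.g.\ obtained by the algorithm of Biedl and Kant~\cite{Biedl1998}), I would attach a small padding gadget per variable whose only purposes are to add exactly one occurrence of $x$, to introduce fresh variables that each appear exactly five times, and to be satisfiable independently of the truth value of $x$. A minimal such gadget is the pendant $\{x,u,u\},\{u,u,u\}$: the second clause forces $u=\text{true}$, which satisfies the first clause no matter the value of $x$, and the counts are $x\mapsto 5$, $u\mapsto 5$. Unfortunately, this gadget is attached via a single edge, so $u$ becomes a cut vertex in the final graph and biconnectedness fails.

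To repair this, I would pair variables along faces of the planar embedding and install a shared gadget per pair. For a pair $(x,y)$ lying on a common face I would insert the four clauses $\{x,u,v\},\{y,u,v\},\{u,u,v\},\{u,v,v\}$ with fresh $u,v$. A direct count yields one extra appearance for $x$ and $y$ and exactly five appearances for $u$ and $v$; all clauses are positive $3$-clauses and setting $u=v=\text{true}$ satisfies them regardless of $x,y$. The variable–clause subgraph of the gadget is $K_{2,4}$, hence planar and internally biconnected, and it attaches to the rest of the graph through the two distinct edges $\{x,\{x,u,v\}\}$ and $\{y,\{y,u,v\}\}$, so none of the fresh vertices becomes a cut vertex and neither do the two attachment clauses. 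The gadget is drawn inside the shared face, preserving planarity.

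The main obstacle is producing a face-respecting pairing of all variables and then arguing that the whole final graph is biconnected. Existence of such a pairing amounts to a perfect matching problem on the face–incidence graph of the variable vertices; parity defects and unmatched vertices can be absorbed by a small constant number of dummy gadgets placed along the outer face. Biconnectedness is then enforced by additionally threading a ring of these $K_{2,4}$-gadgets around the outer face, which contributes two internally vertex-disjoint paths across every cut vertex potentially inherited from $\mathcal{I}$. Equisatisfiability, monotonicity, planarity, and the exact count of five appearances per variable follow directly from the per-gadget analysis; the most delicate step is verifying the absence of cut vertices, which I would carry out by a case analysis distinguishing cut vertices of $\mathcal{I}$ from attachment points of the ring and from internal vertices of the individual $K_{2,4}$-gadgets.
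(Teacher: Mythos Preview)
Your padding gadgets and occurrence counts are fine, but the biconnectedness argument has a structural gap that cannot be repaired along the lines you sketch. The instances produced by Theorem~\ref{the:planmultisatE4} contain many clauses of the shape $\{\bar z,\bar z,\bar z\}$ and $\{v,v,v\}$ (one for every former $2$-clause and one for every original variable). In $G_{\mathcal{V},\mathcal{C}}$ each such clause vertex has degree~$1$, so its unique neighbour is a cut vertex. Every gadget you propose attaches through \emph{variable} vertices only; none of them adds a second edge to a degree-$1$ \emph{clause} vertex, and indeed no gadget could, since the literal content of that clause is fixed. Hence these pendants survive any amount of face-pairing or outer-face threading, and the resulting graph is never biconnected. (Your outer-face ring is also problematic on its own: each ring gadget consumes the one extra occurrence of the two variables it touches, so a variable cannot lie on both a face-pairing gadget and the ring without exceeding five occurrences; and the ring only reaches vertices on the outer face, leaving internal cut vertices untouched.)

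The paper sidesteps the whole issue by reducing instead from \plantc3sat4{} of Kratochv\'{\i}l, whose incidence graph is already vertex $3$-connected. The variable-replacement gadget (the ring of $a_i,x_i$ from the proof of Theorem~\ref{the:bpmsat}, with literal duplication to reach three literals per clause and five occurrences per variable) is itself biconnected and attaches through all the former edges of the replaced variable vertex, so biconnectedness of the final graph is inherited directly from the $3$-connectedness of the source. If you want to salvage your route, you would first have to strengthen Theorem~\ref{the:planmultisatE4} to a version whose graph has no pendants (and is, say, biconnected); as stated, the E$4$ instances are too degenerate to serve as the starting point for the biconnected E$5$ claim.
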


\begin{proof}
Again, the problem is clearly in $\mathcal{NP}$. This time we use a reduction from \plantc3sat4\ \cite{Kratochvil1994} in order to show $\mathcal{NP}$-hardness. In an instance $\mathcal{I} := (\mathcal{V},\, \mathcal{C})$ of this variant,   every variable appears at least three and at most four times and every clause contains exactly three distinct variables. Further, the bipartite graph $G_{\mathcal{V},\mathcal{C}} := (V,\,E)$ with $V := \mathcal{V} \cup \mathcal{C}$ and $E := \{\{v_i,\,C_j\} : v_i \in C_j \vee \bar{v}_i \in C_j\}$ is planar and vertex 3-connected. As in Theorem \ref{the:bpmsat}, we may assume an orthogonal drawing to be given -- otherwise we can compute it in linear time -- and replace every variable vertex locally. In Figure \ref{fig:gadget} the construction for a vertex of degree~3 is given. The adaptation of the gadget for a vertex of degree 4 is straightforward (see Figure \ref{fig:gadget2}). For details on the gadgets see the proof of Theorem \ref{the:bpmsat}. Every clause of the original instance $\mathcal{I}$ contains exactly three variables, and is monotone after the local replacements. Further, every variable vertex of $\mathcal{I}$ is replaced. Thus, by showing that after using a gadget $\mathcal{G}$ for a local replacement, we can duplicate literals in the clauses of $\mathcal{G}$ so that every clause of $\mathcal{G}$ contains exactly three literals and every variable of $\mathcal{G}$ appears exactly 5 times in the final instance, these properties follow globally for the final instance. Recall that duplication of a literal within a clause is now possible. Now, the reasoning is identical for both gadgets: We consider the ring structure of a gadget in the clockwise direction. For $a_i$ and $x_i$ recall that depending on the instance either $a_i$ or $x_i$ has degree~3 and the other one has degree 4. We always duplicate $x_i$ in the clause on the right of the corresponding variable vertex (on the bottom of Figure \ref{fig:gadget2} this clause is drawn on the left). For the two clauses drawn between $a_i$ and $x_i$ we have two cases: If the degree of the vertex $x_i$ is 5 after the just described duplication, we duplicate $a_i$ in the upper clause and $\bar{a}_i$ in the lower clause. Since in this case $a_i$ must have had degree 3 before the duplication, it now has degree 5 as well. Otherwise both $x_i$ and $a_i$ have degree 4 and we duplicate, e.\,g., $x_i$ in the upper clause and $\bar{a}_i$ in the lower clause. Again both variable vertices have now degree~5. Doing this for every pair $(a_i,\,x_i)$ in the ring structure yields an equisatisfiable construction with the desired properties. 

The resulting graph is biconnected since the graph in the instance $\mathcal{I}$ is vertex 3-connected and the gadget we used for the local replacement in the construction of the instance of \planmultisat-\textsc{E}$5$ is (obviously) biconnected.
\begin{figure}
\centering
\includegraphics[width=\textwidth]{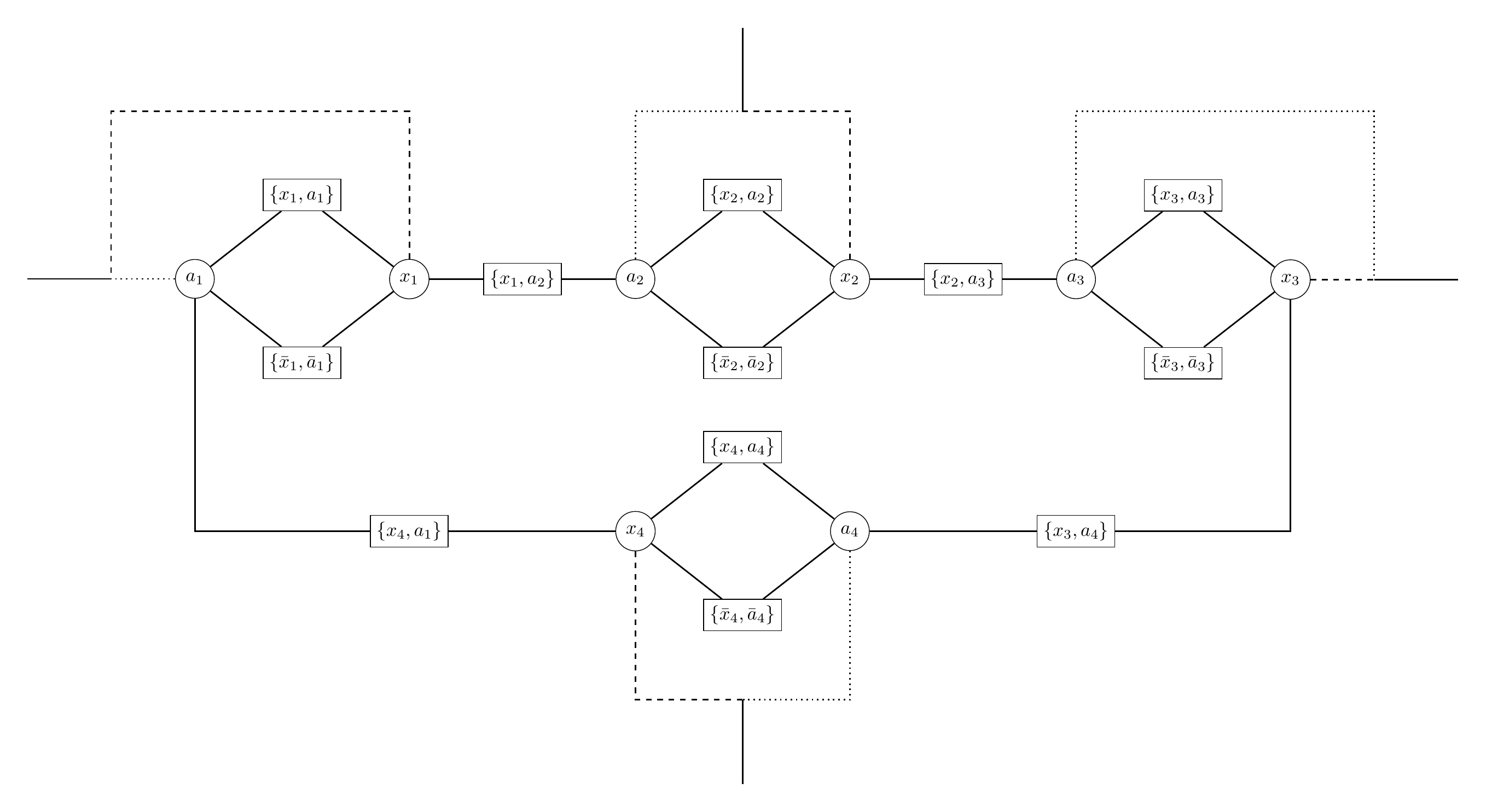}
\caption{Gadget for vertices of degree 4.}
\label{fig:gadget2}
\end{figure}
\end{proof}

\section{Conclusion}

We have shown $\mathcal{NP}$-completeness for several planar variants of the monotone satisfiability problem for boolean formulae in conjunctive normal form involving bounds on the number of times a variable may appear in the formula. The variants considered in Theorem \ref{the:scpmsat} and \ref{the:bpmsat} have an associated bipartite graph with vertex degree bounded by three and four, respectively. Further, we have shown that these variants remain hard, even if restricted to instances in which every variable appears exactly as often in the collection of clauses as the upper bound for the variables allows. Thus, planar and orthogonal drawings of these graphs exist and can be computed efficiently (see the work by Biedl and Kant~\cite{Biedl1998}), which may be useful when using these variants as a starting point for a reduction in a $\mathcal{NP}$-hardness proof. Moreover, the variant proven to be $\mathcal{NP}$-complete in Theorem \ref{the:planmultisatE4} has the property that each clause contains exactly three, not necessarily distinct, literals and every variable appears exactly four times in the collection of clauses. Finally, the variant shown to be $\mathcal{NP}$-complete in Theorem~\ref{the:planmultisatE5} differs from the one considered in Theorem \ref{the:planmultisatE4} in the way that every variable is required to appear exactly five times and the corresponding planar graph is biconnected. 

An interesting open question remains: If we require every clause to contain exactly three \emph{distinct} variables, is there a number $s \in \mathbb{N}$ such that {\sc Planar Monotone 3-Sat-$s$} is $\mathcal{NP}$-hard and if so, what is the smallest number with this property (clearly, $s \geq 4$)? In order to show the first part it would suffice to find a finite set of monotone 3-clauses similar to the sets used in the replacement rules in previous work \cite{Darmann2016} (these rules do not preserve planarity) and a planar drawing thereof such that the variable with the forced truth value is drawn on the outer face. The latter property ensures that we can add this variable to a 2-clause without destroying planarity.    

\bibliographystyle{alpha}
\bibliography{mylit}

\end{document}